\documentclass{IEEEtran}
\usepackage[T1]{fontenc}
\usepackage{array}
\usepackage{algorithm}
\usepackage[noend]{algorithmic}
\usepackage{amsmath}
\usepackage{amssymb}
\usepackage{amsthm}
\usepackage{caption}
\usepackage{colortbl}
\usepackage{diagbox}
\usepackage[super]{nth}
\usepackage{pgfplots}
\usepackage{tikz}

\usetikzlibrary{calc}
\usetikzlibrary{positioning}
\usetikzlibrary{shapes.geometric}

\tikzstyle{node} = [circle, draw, minimum size = 0.6 cm]

\newtheorem{theorem}{Theorem}
\newtheorem{assumption}[theorem]{Assumption}
\newtheorem{condition}[theorem]{Condition}
\newtheorem{corollary}[theorem]{Corollary}
\newtheorem{definition}[theorem]{Definition}
\newtheorem{invariant}[theorem]{Invariant}
\newtheorem{lemma}[theorem]{Lemma}
\newtheorem{proposition}[theorem]{Proposition}


\newcolumntype{x}[1]{>{\centering\arraybackslash\hspace{0pt}}p{#1}}

\DeclareMathOperator{\AU}{AU}
\DeclareMathOperator{\cost}{cost}
\DeclareMathOperator{\edge}{edge}
\newcommand{\lbl}[2]{(#1, #2)}

\DeclareMathOperator{\n}{n}

\DeclareMathOperator{\I}{I}
\DeclareMathOperator{\myrelax}{relax}

\DeclareMathOperator{\source}{source}
\DeclareMathOperator{\target}{target}
\DeclareMathOperator{\RI}{RI}

\newcommand{\eattrs}[2]{(#1, #2)}

\newcommand{\RIO}[2]{[#1, #2)}

\newcommand{\RomanI}{I}
\newcommand{\RomanII}{{I\hspace{-0.05cm}I}}
\newcommand{\RomanIII}{{I\hspace{-0.05cm}I\hspace{-0.05cm}I}}

\title{Generic Dijkstra}

\author{Ireneusz Szcześniak, Bożena Woźna-Szcześniak, Ireneusz
  Olszewski

  \thanks{I.~Szcześniak is with the Department of Computer Science of
    the Częstochowa University of Technology, Poland}

  \thanks{B.~Woźna-Szcześniak is with the Department of Mathematics
    and Computer Science of the Jan Długosz University in Częstochowa,
    Poland}

  \thanks{I.~Olszewski is with the Faculty of Telecommunications,
    Computer Science and Electrical Engineering of the Bydgoszcz
    University of Science and Technology, Poland} }

\begin{document}

\maketitle

\begin{abstract}
  The recently-proposed generic Dijkstra algorithm finds shortest
  paths in networks with continuous and contiguous resources.  While
  the algorithm was proposed in the context of optical networks (and
  is applicable to other networks with finite and discrete resources),
  we present the stated problem in a broader algorithmic setting of
  the greedy approach.  The algorithm was published without a proof of
  correctness, and with a minor shortcoming.  We provide that missing
  proof and offer a correction to the shortcoming.  To prove the
  algorithm correct, we generalize the greedy approach and the Bellman
  principle of optimality to algebraic structures with a partial
  ordering.  By analyzing the size of the search space in the
  worst-case, we argue the stated problem is tractable.  Thus we
  definitely answer a long-standing fundamental question of whether we
  can efficiently find a shortest path in a network with discrete
  resources subject to the continuity and contiguity constraints: yes,
  we can.
\end{abstract}

\begin{IEEEkeywords}
  generic Dijkstra algorithm, proof of correctness, tractability,
  optical networks, discrete resources, continuity constraint,
  contiguity constraint.
\end{IEEEkeywords}

\section{Introduction}


The Dijkstra shortest-path algorithm finds shortest paths in a graph
from a given source vertex to all other vertexes, provided the costs
of edges are non-negative \cite{10.1007/BF01386390}.  However, the
algorithm cannot be used if the paths found have to meet the resource
\emph{continuity} and \emph{contiguity} constraints.


Some networks have \emph{finite} and \emph{discrete} resources which
have to be used by a connection along the path.  For instance, in
communication networks, frequency or time is such resource.  The
resource is divided into discrete units modeled by a set of integers
$\Omega = [0, U)$, where $U$ is a \emph{finite} number of units
offered.  Each edge has a set of available units, which is a subset
of $\Omega$.


A path meets the resource continuity constraint if some given units
are available on each of its edges.  Therefore an algorithm should
find not only a path, but also the set of units available along the
path.  When a connection is established along the found path, the
required units are made unavailable along the path (on its edges),
thus changing the state of the network, and affecting the future path
searches.


The continuity constraint follows from the characteristics of the
modeled network.  For instance, in optical networks, a unit is a
wavelength or a frequency unit, which has to be used along the path in
order to take advantage of the wavelength-division multiplexing.  In
networks capable of advance resource reservation, a unit is a time
slot, which has to be used along the path in order to schedule
transmission at a given time along the edges used.


The resource contiguity constraint requires that the units of a
connection be contiguous.  We model contiguous units by a half-closed
integer interval.  For example, the units of interval $[0, 3)$ are
contiguous, and the units of set $\{0, 2\}$ are not.

  
The contiguity constraint also follows from the characteristics of the
modeled network.  For instance, in the elastic optical networks, the
resource contiguity constraint follows from the physical and
economical limitations of optical networks, where the frequency units
of a connection should occupy a single frequency band.  In networks
capable of advance resource reservation, the (time slot) contiguity
constraint follows from the assumption that a connection should last
unintermitted.


The generic Dijkstra algorithm \cite{10.1364/JOCN.11.000568} finds
efficiently shortest paths that meet the resource continuity and
contiguity constraints.  Both the standard and generic Dijkstra
algorithms are \emph{comparison algorithms}, i.e., they compare labels
based on their ordering.  The generic Dijkstra algorithm is a
generalization of the standard Dijkstra algorithm: while the standard
Dijkstra requires the ordering (of the cost labels) to be total, the
generic Dijkstra relaxes this requirement and allows for a partial
ordering.


Our novel contribution is the generalization of the greedy approach
and the Bellman principle of optimality to partial orderings, the
proof of correctness of the generic Dijkstra algorithm, and a
correction to a minor shortcoming of the algorithm.  By proposing the
worst-case analysis of the search space, we argue that the stated
problem is tractable, a crucial conclusion not only from a theoretical
perspective, but from a practical one as well: for network operations
and management with the ever increasing performanance requirements.


\section{Related works}


This article is an extension of
\cite{10.1109/NOMS56928.2023.10154322}.  We elaborated on the
preliminaries and the terminology (Section \ref{s:preliminaries}) to
better explain the label incomparability and its intransitivity that
set it apart from multiobjective optimization.  We introduced a new
section (Section \ref{s:greedy}) to generalize the greedy approach.
Also, we improved Section \ref{s:bellman}, and various other parts of
the article.


In \cite{10.1364/JOCN.11.000568}, the generic Dijkstra algorithm was
proposed to solve the dynamic routing and spectrum assignment (RSA)
problem in optical networks.  There a path was discarded if it was
unable to support a demand due to modulation constraints.  Here,
however, we concentrate on proving the algorithm correctness in the
most general setting, without considering the constraints of specific
networks.  The algorithm was simulatively demonstrated to efficiently
find exact results by others \cite{9464058}.  To the best of our
knowledge, no proof of algorithm correctness has been published.  The
algorithm was adapted to solve the dynamic routing problem with
dedicated path protection \cite{10.3390/e23091116} but without a proof
of correctness.


Bellman first formulated the \emph{principle of optimality} in
\cite{10.1073/pnas.38.8.716}, and then applied it to the shortest-path
problem in \cite{bellman}.  This principle is not exactly a solution
to a problem (which Bellman readily acknowledged in \cite{bellman}
that the principle did not offer a ready computational scheme), but
rather a stipulation the solution should meet \cite{sniedovich}.  Thus
the principle defines a class of problems whose solutions have the
optimal substructure, i.e., whose subproblem has an optimal solution.
For the shortest-path problem, the principle stipulates that shortest
paths are made of shortest paths, and form a shortest-path tree
\cite{isbn/0262031418}.  Since dynamic programming relies on the
optimal substructure, this principle is also known as the
\emph{dynamic programming principle}.


In \cite{10.1109/INFCOM.1993.253315} the authors proposed an algorithm
for finding a shortest path in a network with continuous and discrete
resources.  While we believe the algorithm is exact, the authors call
is heuristic, albeit in the context of optimizing the overall network
performance.  To the best of our knowledge, no proof of correctness
was proposed for that algorithm.


To the best of our knowledge, the status of this routing problem had
been unclear: no proof of nondeterministic-polynomial (NP)
completeness was proposed, no efficient algorithm with proven
correctness has been published but heuristic algorithms were commonly
used, most notably based on a K-shortest path algorithm or on the
Dijkstra algorithm \cite{10.1007/s11107-017-0700-5}.


However, searching for a path that meets the resource continuity and
contiguity constraints is an easy (in terms of complexity theory)
problem, because the number of shortest paths is polynomially bounded
\cite{10.1364/JOCN.11.000568}.  The problem can be solved in
worst-case polynomial time by searching for a shortest path (with the
standard Dijkstra algorithm) in a polynomially-bounded number of
filtered-graphs, and choosing the shortest from the shortest paths
found.  A filtered graph retains only those edges from the input graph
that have specific contiguous units available.


The proofs of correctness of the Dijkstra algorithm have been offered
in a number of mile-stone text books \cite{isbn/0201000296,
isbn/0262031418, networkflows}.  Usually the proofs are inductive
and combined with contradiction.  We find the most convincing the
proof offered in \cite{networkflows} which does not rely on
contradiction, and that strategy we use in the proof we propose.


The Martins algorithm is a general multilabeling (i.e., allowing
multiple labels for a vertex) algorithm for solving optimization
problems with multiple totally ordered objectives
\cite{10.1016/0377-2217(84)90077-8}.  The algorithm could be used to
solve the stated problem but with the exponential worst-case
complexity.  The generic Dijkstra algorithm is similar to the Martins
algorithm in that it is also multilabeling, but different in that it
is of a single partially ordered objective.


In multiobjective optimization, two solutions can be related by the
\emph{dominance} relation.  If the relation holds, we say that one
solution dominates the other (a dominated solution), otherwise
solutions are nondominated.  The dominance relation is defined for
totally ordered objectives: one solution dominates the other solution
if one objective is smaller for one solution, while the other
objectives are smaller or equal \cite{10.1016/0377-2217(84)90077-8}
for the other solution -- that definition has been used ever since,
e.g., \cite{isbn/3540213988}, but in
\cite{10.1007/978-3-642-48782-8_9} a slightly different, though
equivalent, definition was first proposed for two objectives.


The nondominance relation holds when one solution does not dominate
the other (the dominance relation is partial).  The only reason why
the nondominance relation holds is that one objective is smaller for
one solution, while some other objective is smaller for the other
solution.  However, the incomparability relation can also hold (and
that is the difference between nondominance and incomparability),
because resources are incomparable.


\section{Preliminaries and terminology}
\label{s:preliminaries}

\subsection{Resource interval}
\label{s:ri}


We refer to a half-closed integer interval as a resource interval
(RI).  We describe a set of contiguous units with an RI, and refer to
it with $r$.  For instance, by $r = \RIO{10}{12}$ we mean units 10 and
11.  Function $\min(r)$ returns the lower endpoint of $r$, and
$\max(r)$ the upper endpoint.  For example, $\min(r) = 10$, and
$\max(r) = 12$.

\subsubsection{Relation $\supset$}


We look for the largest RIs using the inclusion relation: the larger,
the better, as it describes also the included RIs.  For instance, $r_1
= \RIO{0}{1}$ is worse than $r_3 = \RIO{0}{2}$ (or $r_3$ is better
than $r_1$), because $r_3$ properly includes $r_1$, i.e., $r_3 \supset
r_1$.  Relation $\supset$ is given by (\ref{e:r_supset}) that holds
for the three top-right white cells of Table \ref{t:r_rels}, while the
converse ($\subset$) holds for the three bottom-left white cells.
When we add equality to $\supset$, we get $\supseteq$ as given by
(\ref{e:r_supseteq}) that also holds for the black cell.  A computer
executes $\supseteq$ faster than $\supset$.  These relations are
transitive.

\begin{equation}
  \begin{split}
    r_i \supset r_j \iff & \min(r_i) < \min(r_j) \text{ and}\\
    & \max(r_i) \ge \max(r_j) \text{ or}\\
    & \min(r_i) \le \min(r_j) \text{ and}\\
    & \max(r_i) > \max(r_j)
  \end{split}
  \label{e:r_supset}
\end{equation}

\begin{equation}
  \begin{split}
    r_i \supseteq r_j \iff & \min(r_i) \le \min(r_j) \text{ and}\\
    & \max(r_i) \ge \max(r_j)
  \end{split}
  \label{e:r_supseteq}
\end{equation}


RIs are $\supseteq$-comparable (or comparable in short), if one
includes the other, properly or not, i.e., $\supseteq$ or $\subseteq$
holds.  However, there can exist RIs for which neither $\supseteq$ nor
$\subseteq$ holds, and so $\supseteq$ is a \emph{partial ordering} (as
it is also reflexive and antisymmetric).  RIs are
$\supseteq$-incomparable (or incomparable in short), denoted by
$\parallel$, if one does not include the other, properly or not, i.e.,
$\not\supseteq$ and $\not\subseteq$ hold (the shaded cells in Table
\ref{t:r_rels}), as given by (\ref{e:r_incomparability}).  For
instance, $r_1$ is incomparable with $r_2 = \RIO{1}{3}$, i.e., $r_1
\parallel r_2$, because neither $\supseteq$ nor $\subseteq$ holds.
Relation $\parallel$ is the symmetric complement of $\supseteq$, and
is intransitive: e.g., $r_1 \parallel r_2$ and $r_2 \parallel r_3$
does not imply $r_1 \parallel r_3$ because $r_1 \subset r_3$.
Therefore, $\parallel$ is not an equivalence relation; incomparable
RIs are not equivalent.

\begin{equation}
  r_i \parallel r_j \iff r_i \not\supseteq r_j \text{ and }
  r_i \not\subseteq r_j
  \label{e:r_incomparability}
\end{equation}

Even though incomparability of RIs is not transitive, the tractability
of the stated problem depends on Proposition \ref{p:savetheday1}
related to the incomparability.

\begin{proposition}
  $r_i \parallel r_j \supseteq r_k \implies r_i \nsubseteq r_k$
  \label{p:savetheday1}
\end{proposition}

\begin{proof}
  Relation $r_i \parallel r_j$ holds in either of two cases: first,
  $\min(r_i) < \min(r_j)$ and $\max(r_i) < \max(r_j)$; second,
  $\min(r_i) > \min(r_j)$ and $\max(r_i) > \max(r_j)$.  Furthermore,
  by (\ref{e:r_supseteq}), $r_j \supseteq r_k \iff \min(r_j) \le
  \min(r_k) \text{ and} \max(r_j) \ge \max(r_k)$.

  In the first case, $\min(r_i) < \min(r_j) \le \min(r_k) \implies
  \min(r_i) < \min(r_k) \implies r_i \nsubseteq r_k$ (the first row of
  Table \ref{t:r_rels}).  In the second case, $\max(r_i) > \max(r_j)
  \ge \max(r_k) \implies \max(r_i) > \max(r_k) \implies r_i \nsubseteq
  r_k$ (the third column of Table \ref{t:r_rels}).
\end{proof}


Relation $\supset$ is a strict partial ordering (as it is also
asymmetric), but not a \emph{weak ordering}\footnote{A weak ordering
is strict, i.e., irreflexive: Lemma 4.6 in \cite{elements}.
Nonetheless, in programming, e.g., in C++, this ordering is on purpose
superfluously called a \emph{strict weak ordering} to hint programmers
that the ordering implementation must not include equivalence.}.  The
symmetric complement of a weak ordering is the equivalence relation
whose equivalence classes are linearly ordered \cite{elements}, and
that is enough to sort by comparison.  However, the symmetric
complement of $\supset$ is not an equivalence relation (since it is
intransitive: e.g., $r_1 \not\supset r_2$ and $r_1 \not\subset r_2$,
$r_2 \not\supset r_3$ and $r_2 \not\subset r_3$, yet $r_1 \subset
r_3$), and so comparison algorithms cannot use $\supset$.

\subsubsection{Relation $>$}


The generic Dijkstra algorithm (since it is a comparison algorithm)
requires at least a weak ordering of RIs, and to that end we introduce
a total ordering $>$, an extension of $\supset$, as defined in Table
\ref{t:r_rels}, where a cell in its top-left part reports on the
$\supset$ relation, and in the bottom-right on $>$.

\begin{table*}
  \caption{Relations between RIs $r_i$ and $r_j$.}
  \label{t:r_rels}
  \centering%
  \begin{tabular}{c|c|c|c|}
    \cline{2-4}
    & $\max(r_i) < \max(r_j)$
    & $\max(r_i) = \max(r_j)$
    & $\max(r_i) > \max(r_j)$\\
    \hline
    \multicolumn{1}{|c|}{$\min(r_i) < \min(r_j)$}
    & \cellcolor[gray]{0.8}\diagbox[dir = SW]{$r_i \parallel r_j$}{$r_i > r_j$}
    & \diagbox[dir = SW]{$r_i \supset r_j$}{$r_i > r_j$}
    & \diagbox[dir = SW]{$r_i \supset r_j$}{$r_i > r_j$}\\
    \hline
    \multicolumn{1}{|c|}{$\min(r_i) = \min(r_j)$}
    & \diagbox[dir = SW]{$r_i \subset r_j$}{$r_i < r_j$}
    & \cellcolor[gray]{0}\color{white}{$r_i = r_j$}
    & \diagbox[dir = SW]{$r_i \supset r_j$}{$r_i > r_j$}\\
    \hline
    \multicolumn{1}{|c|}{$\min(r_i) > \min(r_j)$}
    & \diagbox[dir = SW]{$r_i \subset r_j$}{$r_i < r_j$}
    & \diagbox[dir = SW]{$r_i \subset r_j$}{$r_i < r_j$}
    & \cellcolor[gray]{0.8}\diagbox[dir = SW]{$r_i \parallel r_j$}{$r_i < r_j$}\\
    \hline
  \end{tabular}
\end{table*}


Relation $r_i \supset r_j$ has to imply $r_i > r_j$ (the unshaded
cells in Table \ref{t:r_rels}) because $r_i$ should be processed first
(before $r_j$) as it offers a better solution (see Section
\ref{s:greedy})\footnote{We introduced the $<$ relation for RIs with
Eq.~(1) in \cite{10.1109/NOMS56928.2023.10154322} where $\supset$
implied $<$, but here we reversed its direction, so that $\supset$
implies $>$.}.  The shaded cells (in Table \ref{t:r_rels}) report on
one of two possibilities of defining $>$ for the incomparable RIs so
that $>$ is transitive.  The other possibility would have the relation
flipped in both gray cells.  The choice is arbitrary, and we prefer
the RIs of smaller lower endpoints to be processed first, i.e., $r_i >
r_j$ if $min(r_i) < min(r_j)$.  The other possibility would have the
RIs of greater upper endpoints to be processed first, i.e., $r_i >
r_j$ if $max(r_i) > max(r_j)$.  The ordering defined by (\ref{e:r_>})
is lexicographic by the lower endpoints first (using $<$), and by the
upper second (using $>$).  Relation $\ge$ is given by (\ref{e:r_>=}).

\begin{equation}
  \begin{split}
    r_i > r_j \iff & \min(r_i) < \min(r_j) \text{ or}\\
    & \min(r_i) = \min(r_j) \text{ and}\\
    & \max(r_i) > \max(r_j)
  \end{split}
  \label{e:r_>}
\end{equation}

\begin{equation}
  \begin{split}
    r_i \ge r_j \iff & \min(r_i) < \min(r_j) \text{ or}\\
    & \min(r_i) = \min(r_j) \text{ and}\\
    & \max(r_i) \ge \max(r_j)
  \end{split}
  \label{e:r_>=}
\end{equation}

\begin{lemma}
  Relation $>$ for RIs is transitive.
  \label{l:r_trans}
\end{lemma}

\begin{proof}
  Relation $>$ is transitive because $r_i > r_j > r_k$ implies $r_i >
  r_k$.  Relation $r_i > r_j$ holds in either of two cases: \nth{1},
  if $min(r_i) < min(r_j)$; and \nth{2}, if $min(r_i) = min(r_j)$ and
  $max(r_i) > max(r_j)$.  If both $r_i > r_j$ and $r_j > r_k$ hold in
  the \nth{1} case, then $min(r_i) < min(r_j) < min(r_k)$ holds, and
  so $r_i > r_k$ does.  If $r_i > r_j$ holds in the \nth{1} case, and
  $r_j > r_k$ in the \nth{2}, then $min(r_i) < min(r_j) = min(r_k)$
  holds, and so $r_i > r_k$ does; similarly, if $r_i > r_j$ holds in
  the \nth{2} case, and $r_j > r_k$ in the \nth{1}.  If both $r_i >
  r_j$ and $r_j > r_k$ hold in the \nth{2} case, then $min(r_i) =
  min(r_j) = min(r_k)$ and $max(r_i) > max(r_j) > max(r_k)$ hold, and
  so $r_i > r_k$ does.
\end{proof}


\subsubsection{A set of RIs}
\label{s:set}

A set of units can be viewed as a set of maximal RIs.  For instance,
$A = \{0, 1\}$ can be viewed as $A = \{\RIO{0}{2}\}$ but not as
$\{\RIO{0}{1}, \RIO{1}{2}\}$.  Set $A$ has $\RIO{0}{2}$, i.e.,
$\RIO{0}{2} \in A$.  However, set $A$ does not have $\RIO{0}{1}$,
i.e., $\RIO{0}{1} \notin A$, but instead includes it, i.e.,
$\RIO{0}{1} \subset A$.  Every set has an empty interval $\emptyset$,
i.e., $\emptyset \in A$.

\subsection{Network model}
\label{s:network}


The network is modeled by a weighted, directed multigraph $G = (V,
E)$, where $V = \{v_i\}$ is a set of vertexes, and $E = \{e_i\}$ is a
set of edges.  Vertex $i$ is denoted by $v_i$.  For edge $e_i$,
function $\cost(e_i)$ gives its cost, and function $\AU(e_i)$ gives
its set of available units (i.e., a set of integers), which do not
have to be contiguous.  Function $\source(e_i)$ gives the source
vertex of edge $e_i$, and $\target(e_i)$ the target.  Function
$\I(v_i)$ gives the set of incoming edges for $v_i$.


Path $p = (e_i)$ is a sequence of edges $e_i$ where neighboring edges
meet at the same vertex.  For path $p$, its cost is denoted by
$\cost(p)$, and its set of available units as $\AU(p)$.


The algebraic structure of the cost should have two operators defined.
First, the relation $<$ operator should be a total ordering.  Second,
the $\oplus$ operator with the identity 0 should calculate the
combined cost based on its two operands.  The cost of a path is the
cost of its edges combined with $\oplus$ as given by (\ref{e:cost}).
Furthermore, we make Assumption \ref{a:cost}.


\begin{equation}
  \cost(p) = \bigoplus_{i}\cost(e_i)
  \label{e:cost}
\end{equation}

\begin{assumption}
  Appending edge $e$ to path $p$ does not decrease the cost of the
  path, i.e., $\cost(p) \oplus \cost(e) \ge \cost(p)$.
  \label{a:cost}
\end{assumption}


For instance, the costs of integer and real numbers with the less-than
($<$) and the addition ($+$) operators meet these requirements, for
which Assumption \ref{a:cost} becomes $\cost(e) \ge 0$, the
requirement of the Dijkstra algorithm.


In optical networks, by searching for a path of minimal length, we can
find a path of the minimal product of the path length $d$ and the
number of units $\n(b, d)$ required by modulation constraints for a
demand of bitrate $b$ at distance $d$.  This implication, given by
(\ref{e:length1}), holds because the number of required units does not
decrease with the increasing length, as given by (\ref{e:length2}).
We note that Assumption \ref{a:cost} rules out electronic signal
regeneration that would lower the number of units required by
modulation constraints.  In \cite{10.1364/JOCN.11.000568}, while
searching for a solution for a demand of bitrate $b$, we discarded any
solution incapable of supporting bitrate $b$ at distance $d$.  In this
work, however, the modulation constraints are not considered during
the search, but can be considered after the search by choosing a
capable solution from those found.

\begin{equation}
  d_i \le d_j \implies d_i\n(b, d_i) \le d_j\n(b, d_j)
  \label{e:length1}
\end{equation}

\begin{equation}
  d_i \le d_j \implies \n(b, d_i) \le \n(b, d_j)
  \label{e:length2}
\end{equation}


The set of available units of path $p$ is available on its every edge
$e_i$, as given by (\ref{e:AU}), which models the resource continuity
constraint.  Furthermore, we make Assumption \ref{a:continuity}, which
follows from (\ref{e:AU}) as the intersection produces a subset of its
operands.  Assumption \ref{a:continuity} rules out the resource
conversion, e.g., the wavelength conversion.

\begin{equation}
  \AU(p) = \bigcap_{i}\AU(e_i)
  \label{e:AU}
\end{equation}

\begin{assumption}[Continuity constraint]
  Appending edge $e$ to path $p$ meets the resource continuity
  constraint, i.e., $\AU(p) \cap \AU(e) \subseteq \AU(p)$.
  \label{a:continuity}
\end{assumption}


In the standard shortest-path problem, a path is the solution because
there are no resources.  When path $p$ offers more than one RI from
its $\AU(p)$, we need to identify the RI to use, and therefore we
introduce a \emph{resource path} as the solution.  We abbreviate a
resource path to an rpath (read as an R-path).  An rpath $q = (p, r)$
is a pair of path $p$ and RI $r$ denoted by $\RI(q)$, where $\RI(q)
\subseteq \AU(p)$.

\subsection{Label}


A label is a summary of an rpath.  Because we make Assumption
\ref{a:label}, we can search for rpaths by comparing their labels.  A
vertex can be reached by several rpaths of interest, and so a vertex
can have a set of labels.  We also say that a label is for a vertex.
An rpath is a feasible solution, so a label is feasible too.  The
label is a pair of the cost and the RI of its rpath.  We denote a
label by $l$, its cost by $\cost(l)$, and the RI by $\RI(l)$.  For
instance, label $l = (0, \Omega)$ is of $\cost(l) = 0$, and $\RI(l) =
\Omega$.


\begin{assumption}
  We process labels; we are uninterested in a specific rpath.  Any
  property of an rpath left out of a label is ignored.
  \label{a:label}
\end{assumption}


Definition \ref{d:equivalence} introduces \emph{equivalent} solutions
for a vertex.  Different vertexes can have equal labels, but they
would not represent equivalent solutions.  For the standard Dijkstra
algorithm, paths of equal cost are equivalent: other properties of a
path, e.g., edges taken or vertexes visited, are ignored.  For the
generic Dijkstra algorithm, rpaths of equal labels are equivalent:
other properties of an rpath, e.g., the set of available units of some
edge, are ignored.  Even when two rpaths are of the same path, they
are not equivalent when their RIs differ.

\begin{definition}[Solution equivalence]
  For a given vertex, solutions of equal labels are considered
  equivalent, because they are indifferentiable by their labels.
  \label{d:equivalence}
\end{definition}


Even though several equivalent solutions can exist for a vertex, by
Assumption \ref{a:equivalent}, we do not allow a vertex to have equal
labels, which limits the search space and simplifies the search.  To
the best of our knowledge, this assumption was first formalized in
\cite{10.1007/978-3-642-48782-8_9}, where a set with no equivalent
solutions was termed \emph{minimal}.  In that sense, in our work, a
set of labels is intrinsically minimal since a set has unique labels
only.

\begin{assumption}
  Equivalent solutions yield equivalent solutions, so using only one
  of them chosen arbitrarily is correct.  Other equivalent solutions
  are assumed expendable.
  \label{a:equivalent}
\end{assumption}


Labels are compared for two reasons: first, with $\prec$ by relaxation
to determine whether a label should be kept or discarded; second, with
$<$ by sorting to determine which label should be retrieved from the
priority queue first.  Table \ref{t:l_rels} shows relations between
labels $l_i$ and $l_j$ depending on their costs and RIs, where a cell
in its top-left part reports on the $\prec$ relation, and in the
bottom-right on $<$.

\begin{table*}
  \caption{Relations between labels $l_i$ and $l_j$.}
  \label{t:l_rels}
  \centering%
  \begin{tabular}{c|c|c|c|c|c|}
    \cline{2-6}
    & $\RI(l_i) \subset \RI(l_j)$
    & $\RI(l_i) = \RI(l_j)$
    & $\RI(l_i) \supset \RI(l_j)$
    & \multicolumn{2}{c|}{$\RI(l_i) \parallel \RI(l_j)$}\\
    \hline
    \multicolumn{1}{|c|}{$\cost(l_i) < \cost(l_j)$}
    & \cellcolor[gray]{0.9}\diagbox[dir = SW]{$l_i \parallel l_j$}{$l_i < l_j$}
    & \diagbox[dir = SW]{$l_i \prec l_j$}{$l_i < l_j$}
    & \diagbox[dir = SW]{$l_i \prec l_j$}{$l_i < l_j$}
    & \multicolumn{2}{c|}{\cellcolor[gray]{0.9}
      \diagbox[dir = SW]{$l_i \parallel l_j$}{$l_i < l_j$}}\\
    \hline
    \multicolumn{1}{|c|}{$\cost(l_i) = \cost(l_j)$}
    & \diagbox[dir = SW]{$l_i \succ l_j$}{$l_i > l_j$}
    & \cellcolor[gray]{0}{\color{white}{$l_i = l_j$}}
    & \diagbox[dir = SW]{$l_i \prec l_j$}{$l_i < l_j$}
    & \cellcolor[gray]{0.7}$l_i \parallel l_j$
    & \cellcolor[gray]{0.7}
    \begin{tabular}{@{}c@{}}
      $l_i < l_j \text{ if } \RI(l_i) > \RI(l_j)$\\
      $l_i > l_j \text{ if } \RI(l_i) < \RI(l_j)$
    \end{tabular}\\
    \hline
    \multicolumn{1}{|c|}{$\cost(l_i) > \cost(l_j)$}
    & \diagbox[dir = SW]{$l_i \succ l_j$}{$l_i > l_j$}
    & \diagbox[dir = SW]{$l_i \succ l_j$}{$l_i > l_j$}
    & \cellcolor[gray]{0.9}\diagbox[dir = SW]{$l_i \parallel l_j$}{$l_i > l_j$}
    & \multicolumn{2}{c|}{\cellcolor[gray]{0.9}
      \diagbox[dir = SW]{$l_i \parallel l_j$}{$l_i > l_j$}}\\
    \hline
  \end{tabular}
\end{table*}


\subsubsection{The $\prec$ relation}


Label $l_i$ is better than $l_j$ in two cases: \nth{1}, if it offers
at a lower cost a better or equal RI; \nth{2}, if it offers at a lower
or equal cost a better RI.  This relation is given by (\ref{e:l_prec})
and we chose $\prec$ to denote it.\footnote{In
\cite{10.1364/JOCN.11.000568} we introduced this relation with Table 1
and denoted it by $<$.  However, since this relation is partial, here
we traditionally denote it by $\prec$.  The direction (whether to
choose $\prec$ or $\succ$, and call it minimum or maximum efficient)
is debatable; we prefer $\prec$ in order to follow the convention of
\cite{isbn/3540213988}, specifically Table 1.2 there, to denote any
kind of optimality with ``less than''.}  The better or equal relation
is denoted by $\preceq$ and given by (\ref{e:l_preceq}).  A computer
executes $\preceq$ faster than $\prec$.


\begin{equation}
  \begin{split}
    l_i \prec l_j \iff & \cost(l_i) < \cost(l_j) \text{ and}\\
    & \RI(l_i) \supseteq \RI(l_j) \text{ or}\\
    & \cost(l_i) \le \cost(l_j) \text{ and}\\
    & \RI(l_i) \supset \RI(l_j)
  \end{split}
  \label{e:l_prec}
\end{equation}


\begin{equation}
  \begin{split}
    l_i \preceq l_j \iff & \cost(l_i) \le \cost(l_j) \text{ and}\\
    & \RI(l_i) \supseteq \RI(l_j)
  \end{split}
  \label{e:l_preceq}
\end{equation}


We compare labels with $\prec$ to find \emph{efficient} labels.  A
label is efficient for a given vertex in comparison with other labels
for the \emph{same vertex}.  As given by Definition
\ref{d:efficiency}, an efficient label is the best, i.e., no better
label exists.  If $l_i \prec l_j$, then $l_j$ is inefficient, but it
is undecided whether $l_i$ is efficient.

\begin{definition}[Label efficiency]
  Label $l$ is efficient if, for the same vertex, there does not exist
  label $l^*$ such that $l^* \prec l$.
\label{d:efficiency}
\end{definition}

By Definition \ref{d:expendability}, we can discard label $l'$ if a
better or equal label $l$ exists for the same vertex.  We say that
label $l'$ is expendable in relation with some existing label $l$, or
that labels $l$ and $l'$ are related with $\preceq$, i.e., $l \preceq
l'$.  We note that the difference between efficiency (defined using
$\prec$) and expendability (defined using $\preceq$) is not just the
equality.  The efficiency stipulates that label $l^*$ does \emph{not}
exist, while the expendability that $l$ does exist.

Label $l$ usually belongs to (exists in) some set $L$, and so
(\ref{e:expendability1}) defines relation $\preceq$ for set $L$ and
label $l'$ in the weak (existential) sense: we can discard label $l'$
if there exists in $L$ a better or equal label.  A strong (universal)
sense would be too strong: $l \preceq l'$ would have to hold for every
label $l$.  Furthemore, we extend the expendability relation $\preceq$
to sets, as given by (\ref{e:expendability2}), where $L'$ is
expendable in relation with $L$ if its every label is expendable in
relation with $L$.  Alternative, needlessly strong expendability
definitions for sets are discussed in Appendix \ref{s:needless}.

\begin{definition}[Label expendability]
  A label is expendable if there exists an equal or better label for
  the same vertex.
  \label{d:expendability}
\end{definition}

\begin{equation}
  L \preceq l' \iff \exists l \in L (l \preceq l')
  \label{e:expendability1}
\end{equation}

\begin{equation}
  L \preceq L' \iff \forall l' \in L' (L \preceq l')
  \label{e:expendability2}
\end{equation}


The minimum of set $L$ of labels is a set of minimal labels, as given
by (\ref{e:min}), i.e., inefficient labels are discarded.  There is no
need to remove equal labels, as a set has only unique labels, agreeing
with Assumption \ref{a:equivalent}.  If $L$ is complete (no other
labels exist), then $\min L$ produces a set of efficient labels.

\begin{equation}
  \min L = \{l \in L : \nexists l' \in L (l' \prec l)\}
\label{e:min}
\end{equation}


Labels are $\preceq$-comparable (or comparable in short), if one is
better than or equal to the other, i.e., $\preceq$ or $\succeq$ holds.
However, there can exist lables for which neither $\preceq$ nor
$\succeq$ holds, and so $\preceq$ is a \emph{partial ordering}.
Labels are $\preceq$-incomparable (or incomparable in short), denoted
by $\parallel$, if one is not better than or equal to the other, i.e.,
$\not\preceq$ and $\not\succeq$ hold (the shaded cells in Table
\ref{t:l_rels}), as given by (\ref{e:l_incomparability}).  Relation
$\parallel$ is the symmetric complement of $\preceq$.

\begin{equation}
  l_i \parallel l_j \iff l_i \not\preceq l_j \text{ and } l_i
  \not\succeq l_j
  \label{e:l_incomparability}
\end{equation}


Labels are incomparable for one of two reasons.  First, because their
RIs are incomparable (column 4 of Table \ref{t:l_rels}, regardless of
the cost): e.g., if $l_1 = \lbl{0}{r_1}$, $l_2 = \lbl{1}{r_2}$ and
$l_3 = \lbl{0}{r_3}$, then $l_1 \parallel l_2$ since $r_1 \parallel
r_2$, and $l_2 \parallel l_3$ since $r_2 \parallel r_3$.  Second, if
one label offers a better RI at a higher cost than the other: e.g., if
$l_4 = \lbl{1}{\RIO{0}{3}}$, then $l_1 \parallel l_4$ and $l_3
\parallel l_4$ since $l_4$ offers a better RI at a higher cost than
$l_1$ and $l_3$ (the shaded cell in column 1 of Table \ref{t:l_rels}),
or (conversly) that labels $l_1$ and $l_3$ offer worse RIs at a lower
cost than $l_4$ (the shaded cell in column 3 of Table \ref{t:l_rels}).


The label incomparability is intransitive.  Even though labels are
incomparable for the first reason, e.g., $l_1 \parallel l_2$ and $l_2
\parallel l_3$ (the upper $\parallel$'s in Fig.~\ref{f:ili}), the
transitivity of the incomparability does not always hold, e.g., $l_1
\parallel l_3$ does not hold because $l_1 \succ l_3$ (the lower
$\succ$ in Fig.~\ref{f:ili}).  Next, even though labels are
incomparable for the second reason, e.g., $l_1 \parallel l_4$ and $l_4
\parallel l_3$ (the lower $\parallel$'s in Fig.~\ref{f:ili}), the
transitivity of the incomparability does not always hold, e.g., $l_1
\parallel l_3$ does not hold.  Finally, even though some labels are
incomparable for the first reason, e.g., $l_2 \parallel l_1$, and some
other labels for the second reason, e.g., $l_1 \parallel l_4$, the
transitivity does not hold, e.g., $l_2 \parallel l_4$ does not hold
because $l_2 \succ l_4$ (the upper $\succ$ in Fig.~\ref{f:ili}).
Therefore, $\parallel$ is not an equivalence relation; incomparable
labels are not equivalent.

\begin{figure}
  \centering%
  \begin{tikzpicture}

    \node [node] (l4) {$l_4$};
    \node [node] (l1) at ($(l4) + (210:2 cm)$) {$l_1$};
    \node [node] (l2) at ($(l4) + (90:2 cm)$) {$l_2$};
    \node [node] (l3) at ($(l4) + (330:2 cm)$) {$l_3$};

    \path (l1) edge node [fill = white] {$\parallel$} (l2);
    \path (l2) edge node [fill = white] {$\parallel$} (l3);
    \path (l1) edge node [fill = white] {$\succ$} (l3);

    \path (l4) edge node [fill = white] {$\parallel$} (l1);
    \path (l4) edge node [fill = white] {$\succ$} (l2);
    \path (l4) edge node [fill = white] {$\parallel$} (l3);

  \end{tikzpicture}
  \caption{Intransitivity of label incomparability.}
  \label{f:ili}
\end{figure}

Even though incomparability of labels is not transitive, the
tractability of the stated problem depends on Proposition
\ref{p:savetheday2} related to the incomparability.

\begin{proposition}
  $l_i \parallel l_j \preceq l_k \implies l_i \nsucceq l_j$
  \label{p:savetheday2}
\end{proposition}

\begin{proof}
  Relation $li \parallel l_j$ holds for one of two reasons: first,
  $\RI(l_i) \parallel \RI(l_j)$; second, one of the labels offers a
  better RI at a higher cost than the other.  Furthermore, since $l_j
  \preceq l_k$, then $\cost(l_j) \le \cost(l_k)$ and $\RI(l_j)
  \supseteq \RI(l_k)$.

  In the first case, $\RI(l_i) \parallel \RI(l_j) \supseteq \RI(l_k)
  \implies \RI(l_i) \nsubseteq \RI(l_k)$ by Proposition
  \ref{p:savetheday1}, and that in turn implies $l_i \nsucceq l_j$
  (the first and second column of Table \ref{t:l_rels}).

  In the second case, either $\cost(l_i) < \cost(l_j)$ and $\RI(l_i)
  \subset \RI(l_j)$ hold or $\cost(l_i) > \cost(l_j)$ and $\RI(l_i)
  \supset \RI(l_j)$ hold.  If the former holds, then $\cost(l_i) <
  \cost(l_j) \le \cost(l_k)$ implies $l_i \nsucceq l_k$ (the first row
  of Table \ref{t:l_rels}).  If the latter holds, then $\RI(l_i)
  \supset \RI(l_j) \supseteq \RI(l_k)$ implies $l_i \nsucceq l_k$ (the
  third column of Table \ref{t:l_rels}).
\end{proof}


Relation $\prec$ is a strict partial ordering but not a weak ordering
because its symmetric complement (i.e., $\parallel$) is not an
equivalence relation (since $\parallel$ is intransitive), and so
comparison algorithms cannot use $\prec$.

\subsubsection{The $<$ relation}


For sorting, (\ref{e:l_<}) defines a total ordering $<$ so that either
$<$ or $>$ holds for all nonequal labels.  Ordering $\le$ is given by
(\ref{e:l_<=}).  To compare RIs, (\ref{e:l_<}) uses (\ref{e:r_>}),
while (\ref{e:l_<=}) uses (\ref{e:r_>=}).  Ordering $<$ extends
$\prec$, i.e., $< \supset \prec$, so that $\prec$ implies $<$ (the
three top-right white cells in Table \ref{t:l_rels}) because a better
label should be processed first in accordance with the greedy strategy
(see Section \ref{s:greedy}).

\begin{equation}
  \begin{split}
    l_i < l_j \iff & \cost(l_i) < \cost(l_j) \text{ or}\\
    & \cost(l_i) = \cost(l_j) \text{ and}\\
    & \RI(l_i) > \RI(r_j)
  \end{split}
  \label{e:l_<}
\end{equation}

\begin{equation}
  \begin{split}
    l_i \le l_j \iff & \cost(l_i) < \cost(l_j) \text{ or}\\
    & \cost(l_i) = \cost(l_j) \text{ and}\\
    & \RI(l_i) \ge \RI(r_j)
  \end{split}
  \label{e:l_<=}
\end{equation}


The ordering is lexicographic by the cost first (using $<$), and by
the RI second (using $>$).  We denote it by $<$ since the cost is
compared first using $<$.  For labels of different cost (the top row
of Table \ref{t:l_rels}, which includes the incomparable labels that
are shaded light), we compare costs only: $l_i < l_j$ if $\cost(l_i) <
\cost(l_j)$, which is the first disjunctive clause of (\ref{e:l_<}).


For labels of equal cost (the middle row of Table \ref{t:l_rels}), we
break the cost tie with RIs: $l_i < l_j$ if $\cost(l_i) = \cost(l_j)$
and $\RI(l_i) > \RI(l_j)$, which is the second disjunctive clause of
(\ref{e:l_<}).  We compare RIs using $>$ for two reasons.  First, for
the comparable labels (the right white cell of the middle row), $l_i
\prec l_j$ must imply $l_i < l_j$, and it does if $\RI(l_i) >
\RI(l_j)$: for labels of equal cost, $l_i \prec l_j$ holds if
$\RI(l_i) \supset \RI(l_j)$, which in turn implies $\RI(l_i) >
\RI(l_j)$.  Second, for the incomparable labels (shaded dark in Table
\ref{t:l_rels}), the choice is arbitrary: we could flip the relation
for RIs (i.e., compare RIs with $<$ instead of $>$) and $<$ for labels
would still be transitive, but that choice would complicate
(\ref{e:l_<}), which we prefer to keep simple.


The converse ($>$) for labels holds for the following of Table
\ref{t:l_rels}: the bottom row, the left white cell of the middle row,
and the cell shaded dark (the middle row) if $\RI(l_i) < \RI(l_j)$.


\begin{lemma}
  Relation $<$ for labels is transitive.
  \label{l:l_trans}
\end{lemma}

\begin{proof}
  Analogous to the proof of Lemma \ref{l:r_trans}, provided the
  orderings for the values compared are transitive: ordering $<$ for
  the cost is transitive by assumption (Section \ref{s:network}), and
  ordering $>$ for the RI is transitive by Lemma \ref{l:r_trans}.
\end{proof}

The label relations are summerized in Fig.~\ref{f:relations}.  All
label pairs are in the outer set because $\le$ is a total ordering.
The intermediate set (for $\preceq$) contains comparable label pairs.
The inner set (for $\prec$) contains label pairs where one label is
better (and the other worse).

\begin{figure}
  \centering%
  \begin{tikzpicture}

    \node[draw, ellipse, name = better,
      minimum width = 3 cm, minimum height = 0.75 cm] {better};
    \node[fill = white] at (better.north) {$\prec$};

    \node[draw, ellipse, name = comparable,
      minimum width = 5 cm, minimum height = 2 cm] {};
    \node[fill = white] at (comparable.north) {$\preceq$};
    \node[above = 0 of comparable.south] {equal};

    \node[draw, ellipse, name = all,
      minimum width = 7 cm, minimum height = 3.3 cm] {};
    \node[fill = white] at (all.north) {$\le$};
    \node[above = 0.5 mm of all.south] {incomparable};

  \end{tikzpicture}
  \caption{The label relations.}
  \label{f:relations}
\end{figure}


\subsubsection{How labels get produced}


Set $L$ of \emph{candidate} labels $l'$ is produced when to a path
described with label $l$ we append edge $e$, which is denoted by $l
\oplus e$.  We say that $l$ and $e$ yield $l'$ or that $l'$ is derived
from $l$ and $e$.  One way of defining $l \oplus e$ is given by
(\ref{e:L}): labels $l'$ have the same cost that depends on the cost
of $l$ and $e$, i.e., $\cost(l') = \cost(l) \oplus \cost(e)$, but
their RIs differ when $\RI(l) \cap \AU(e)$ fans out to a set of RIs.
The nonempty RI of $l'$ is in the intersection of $\RI(l)$ and
$\AU(e)$ to meet the continuity constraint.  Labels in $L$ are
incomparable, and are for the same vertex.

\begin{equation}
  \begin{split}
    L = l \oplus e = \{l' : & \cost(l') = \cost(l) \oplus \cost(e)
    \text{ and}\\
    & \RI(l') \in \RI(l) \cap \AU(e) \text{ and}\\
    & \RI(l') \ne \emptyset \}
  \end{split}
  \label{e:L}
\end{equation}


Function $\edge(l)$ gives the edge of label $l$, i.e., the edge that
was appended last to produce the label.  From the mathematical,
minimalistic perspective, the edge is not part of a label and is not
taken into account when comparing labels.  In the standard Dijkstra
too, the preceding vertex is not part of a label but is easily
associated with one because a vertex can have a single label and a
single predecessor; the preceding vertex even does not have to be
stored in a separate predecessor vector, and instead could be found
based on the label, its preceding labels and the graph.  Likewise, in
the generic Dijkstra, the edge of a label could be found based on the
label, its preceding labels and the graph\footnote{In our software
implementation, we associate an edge with a label so that tracing back
a path is easier.  With a label, we could also associate its preceding
label for even easier implementation and faster back tracing, but that
may deteriorate the search performance.}.


Definition \ref{d:expendability} declares label $l'$ expendable in
relation to label $l$ that we already have in two cases.  First, when
$l = l'$ because we are uninterested in another equal label $l'$ as,
by Assumption \ref{a:equivalent}, we do not differentiate between
equivalent paths.  Second, when $l \prec l'$, which is substantiated
if an inefficient label is unable to yield a better or an incomparable
one for us to take an interest, as stated by Assumption
\ref{a:inefficient}.

\begin{assumption}
  An ineffcient label yields equal or inefficient labels: $l \prec l'
  \implies l \oplus e \preceq l' \oplus e$, for any edge $e$.
  \label{a:inefficient}
\end{assumption}


How specifically the set $L$ of candidate labels is produced depends
on the communication network being modeled.  While (\ref{e:L}) defines
the resource interval $\RI(l')$ to be maximal (note the use of $\in$,
as defined in Section \ref{s:set}), some other definitions of $\oplus$
could allow for nonmaximal or even overlapping RIs, provided the
produced labels are incomparable.  If $L$ had comparable labels, then
at least one would be expendable.

\section{Generic greedy approach}
\label{s:greedy}


The standard and generic Dijkstra algorithms are incarnations of the
\emph{greedy algorithm}, because they use the \emph{greedy approach}.
The greedy algorithm and the greedy approach are complementary
concepts of broad utility \cite{10.1007/978-3-642-58191-5}, concepts
that we build upon and generalize.  We derive the necessary condition
of the generic greedy approach, and the sufficient condition for the
stated problem.


The greedy approach works with labels by Assumption \ref{a:label}.  In
shortest path problems, labels are commonly called \emph{permanent} or
\emph{tentative} (also known as temporary).  A permanent label is
optimal, and is selected from among tentative labels.  A tentative
label is admitted from among candidate labels.  For vertex $i$, the
set of permanent labels is denoted by $P_i$, and the set of tentative
labels by $T_i$.  We collectively refer to the permanent and tentative
labels as the \emph{known} labels.  All known labels are kept in $P =
\{P_i\}$ and $T = \{T_i\}$.  The search starts with an initial
tentative label, and as the search progresses more labels become
known.


To decide whether a label is expendable, the greedy approach compares
labels for the \emph{same vertex}.  A candidate label becomes
tentative, unless it gets discarded because an equal (by Assumption
\ref{a:equivalent}) or better (by Assumption \ref{a:inefficient})
label is already known for the same vertex.  A tentative label becomes
permanent, unless it turns out expendable and gets discarded (by
Assumption \ref{a:inefficient}) when a better candidate label becomes
tentative for the same vertex.  Corollary \ref{c:known_incomparable}
follows from expendability.

\begin{corollary}
  Known labels for the same vertex are incomparable.
  \label{c:known_incomparable}
\end{corollary}

\begin{proof}
  Comparable labels are either better, equal or worse.  By Definition
  \ref{d:expendability}, another equal label or a worse label for the
  same vertex would be expendable, and so no such labels are kept,
  leaving only the incomparable.
\end{proof}


In the search for optimal labels, the greedy approach compares labels
for another purpose: to find the best tentative labels.  Tentative
labels are compared \emph{regardless their vertexes}, i.e., their
vertexes are either the same or different.  This seems like comparing
apples to oranges, but the greedy approach does it to reach for the
lowest-hanging fruit, i.e., to select one of the best tentative labels
no matter what its vertex is.


Tentative labels are compared so that one of the best labels is
selected first, equal and incomparable labels are selected in
arbitrary order, and worse labels are selected or discarded later.  It
is the comparison for different vertexes that finds the best tentative
labels, and not the comparison for the same vertex that is
inconclusive by Corollary \ref{c:known_incomparable}.  Labels for
different vertexes that compare equal or worse are not expendable
because expendability, and in turn inefficiency and equivalance, are
defined for the labels of the same vertex.  One label from among the
best incomparable tentative labels is arbitrarily selected.



The greedy approach declares the selected label (\emph{globally})
optimal, thus making it permanent.  The selected label is
\emph{locally} optimal by Definition \ref{d:loptimality} but is
declared (globally) optimal: a better label for the same vertex must
not exist.  This declaration stands for the known labels by Corollary
\ref{c:known_incomparable}, and for the unknown labels by Assumption
\ref{a:unknown}.

\begin{definition}[Local optimality]
  Label $l$ is locally optimal if and only if $l \in \min T$.
  \label{d:loptimality}
\end{definition}

\begin{assumption}
  An unknown label must not be better than any permanent label for the
  same vertex.
  \label{a:unknown}
\end{assumption}


Assumption \ref{a:unknown} is made not only for the selected label,
but for any label that has been selected, i.e., for any permanent
label.  This assumption is equivalent to the well-known assumption
that the local optimality implies (global) optimality, i.e., if a
label is locally optimal (the selected label) then it is (globally)
optimal.  Assumption \ref{a:unknown} entails Corollary
\ref{c:candidate1}.


\begin{corollary}
  A candidate label must not be better than any permanent label for
  the same vertex.
  \label{c:candidate1}
\end{corollary}

\begin{proof}
  From among the unknown labels, the greedy approach considers only
  the candidate labels because they are promissing.  The remaining
  unknown labels that could be derived from discarded labels are not
  considered because they would be expendable (by Assumptions
  \ref{a:equivalent} and \ref{a:inefficient}).  Therefore we only have
  to require a candidate label not be better than any permanent label
  for the same vertex.
\end{proof}


The selected label is the \emph{forefront of optimality} for two
reasons: first, there is no permanent label ahead of the forefront by
Proposition \ref{p:forefront}; second, there is no tentative label
behind the forefront by Definition \ref{d:loptimality}.  Regardless
their vertexes.  The forefront advances by making the selected label
permanent.

\begin{proposition}
  No permanent label is worse than selected label $l$: $\forall l_p
  \in P (l_p \nsucc l)$.
  \label{p:forefront}
\end{proposition}

\begin{proof}
  We prove by contradiction: $\neg\forall l_p \in P (l_p \nsucc l)
  \iff \neg\nexists l_p \in P (l_p \succ l) \iff \exists l_p \in P (l
  \prec l_p)$, but that is a contradiction, since $l_p$ was assumed
  efficient, and must have been selected before $l$.
\end{proof}



The greedy approach maintains Invariant \ref{invariant} that keeps the
permanent labels safe: for every permanent label $l_p$ there is no
better tentative label $l_t$, regardless their vertexes.  The
invariant holds for tentative labels of the same vertex, i.e., there
is no better tentative label for the vertex of label $l_p$, because
they are the subset of $T$.  As the forefront advances, Invariant
\ref{invariant} is maintained by Corollary \ref{c:maintain1}.

%

\begin{invariant}
  $\forall l_p \in P \; \forall l_t \in T (l_p \nsucc l_t)$
  \label{invariant}
\end{invariant}

\begin{corollary}
  Invariant \ref{invariant} is maintained when the selected label is
  made permanent.
  \label{c:maintain1}
\end{corollary}

\begin{proof}
  Making the selected label permanent moves it from $T$ to $P$.
  Before the label is moved, we assume the invariant $\forall l_p \in
  P \; \forall l_t \in T (l_p \nsucc l_t)$ holds.  After label $l$ is
  moved, the invariant $\forall l_p \in P + \{l\} \; \forall l_t \in T
  - \{l\} (l_p \nsucc l_t)$ holds for two reasons.  First, $\forall
  l_p \in P \; \forall l_t \in T - \{l\} (l_p \nsucc l_t)$ holds
  because we assumed the invariant held before moving the label.
  Second, $\forall l_t \in T - \{l\} (l \nsucc l_t)$ holds because $l$
  is one of the best tentative labels.  In short, this invariant is
  maintained because the selected label cannot be rendered suboptimal
  by some tentative label.
\end{proof}


The selected label is pivotal because it alone yields candidate
labels.  By Proposition \ref{p:maintain2}, a candidate label should be
ahead of the forefront because no permanent label (for whatever
vertex, including the vertex of the candidate label) is there (ahead
of the forefront) that the candidate label could render suboptimal.
Equivalently, to prevent the candidate label get behind the forefront
of optimality (where the optimal labels are), the candidate label must
not be better than the selected label, which is Condition
\ref{c:necessary}.

\begin{proposition}
  Corollary \ref{c:candidate1} holds for a candidate label that
  maintains Invariant \ref{invariant}.
  \label{p:maintain2}
\end{proposition}

\begin{proof}
  A candidate label becomes tentative, if it is not expendable.  If
  that label maintains Invariant \ref{invariant}, i.e., it is not
  better than any permanent label, i.e., regardless their vertexes,
  then Corollary \ref{c:candidate1} holds because it is not better
  than any permanent label for the same vertex.
\end{proof}

\begin{condition}[Necessary condition]
  A label must not produce a better label.
  \label{c:necessary}
\end{condition}


Condition \ref{c:necessary} is necessary, but whether it is sufficient
for a candidate label to maintain Invariant \ref{invariant} depends on
the transitivity of its not-better-than relation.  By Proposition
\ref{p:forefront}, the selected label is not better than any permanent
label.  By Condition \ref{c:necessary}, a candidate label is not
better than the selected label.  If the non-better-than relation is
transitive, then the candidate label is not better than any permanent
label, Corollary \ref{c:candidate1} holds, and Condition
\ref{c:necessary} is sufficient.  If the not-better-than relation is
intransitive, then Condition \ref{c:necessary} is not sufficient.


The standard Dijkstra algorithm optimizes in the minimum sense: it
selects a locally cheapest label $l$, i.e., there does not exist a
cheaper tentative label, or equivalently, that other tentative labels
$l'$ are no cheaper, i.e., $\cost(l) \ngtr \cost(l')$.  A cheaper
label is better.  To meet Condition \ref{c:necessary}, a label derived
from the selected label $l$ and edge $e$ must not be cheaper, i.e.,
$\cost(l) \ngtr \cost(l) + \cost(e)$ must hold, which is the standard
Dijkstra assumption: $\cost(e) \ge 0$.  Condition \ref{c:necessary} is
sufficient because $\ngtr$ is transitive.



The generic Dijkstra algorithm optimizes in the efficiency sense: it
selects a locally efficient label $l$, i.e., there does not exist a
better tentative label, or equivalently, that other tentative labels
$l'$ are no better, i.e., $l \nsucc l'$.  To meet Condition
\ref{c:necessary}, labels $l'$ derived from label $l$ and edge $e$
must not be better, i.e., $l \nsucc l'$, or equivalently, that derived
labels $l'$ must be equal to, worse than or incomparable with label
$l$, i.e., $l \preceq l' \text{ or } l \parallel l'$.


However, since relation $\nsucc$ is intransitive (as $\parallel$ is
intransitive), Invariant \ref{invariant} might not be maintained: even
though Proposition \ref{p:forefront} holds, and even though Condition
\ref{c:necessary} is met, candidate label $l'$ might still render some
permanent label $l_p$ suboptinal, i.e., $\exists l_p \in P (l' \prec
l_p)$ might hold.


Therefore, we introduce Condition \ref{c:sufficient} that is
sufficient by Proposition \ref{p:sufficient}, and that holds for the
derived labels of the stated problem by Proposition \ref{p:derived}.


\begin{condition}[Sufficient condition]
  A label must produce an equal or better label.
  \label{c:sufficient}
\end{condition}

\begin{proposition}
  Invariant \ref{invariant} is maintained by a candidate label that
  meets Condition \ref{c:sufficient}.
  \label{p:sufficient}
\end{proposition}

\begin{proof}
  Given permanent label $l_p$, selected label $l$, and candidate label
  $l'$, the following hold: $l_p \nsucc l$ by Proposition
  \ref{p:forefront}, and $l \preceq l'$ by Condition
  \ref{c:sufficient}.  For Invariant \ref{invariant} to be maintained,
  we need to prove that $l_p \nsucc l'$ holds.

  Relation $l_p \nsucc l$ holds in either of two cases: first, if $l_p
  \preceq l$; second, if $l_p \parallel l$.  In the first case, $l_p
  \preceq l \preceq l' \implies l_p \preceq l'$, because $\preceq$ is
  transitive.  In the second case, $l_p \parallel l \preceq l'
  \implies l_p \nsucceq l'$ by Proposition \ref{p:savetheday2}.
  Therefore in both cases $l_p \nsucc l'$ holds.
\end{proof}

\begin{proposition}
  Relation $l \preceq l'$ holds for $l'$ derived from $l$.
  \label{p:derived}
\end{proposition}

\begin{proof}
  For any $l' \in l \oplus e$ and edge $e$, relation $l \preceq l'$
  holds for two reasons.  First, $\cost(l) \le \cost(l')$ by
  Assumption \ref{a:cost}.  Second, $\RI(l) \supseteq \RI(l')$ by
  Assumption \ref{a:continuity}.  In Table \ref{t:l_rels}, the related
  cells are unshaded.
\end{proof}

\subsection{A shortcoming of the generic Dijkstra}
\label{ss:shortcoming}

In \cite{10.1364/JOCN.11.000568}, the generic Dijkstra algorithm was
proposed to sort the labels in the priority queue in the ascending
order of their cost only, and that is a shortcoming, but only when
appending an edge to a path does not increase the cost of the path.

\subsubsection{A failing example}

Figure \ref{f:fe} shows an example where the efficient path from
vertex $s$ to $t$ goes through vertex $u$, is of cost 1 and the RI of
\RIO{0}{2}.  However, generic Dijkstra, depending on the
implementation, could mistakenly find efficient a path of cost 1 and
the RI of \RIO{0}{1} that has edge $e_1$ only.

When visiting vertex $s$, the algorithm would relax edges $e_1$ and
$e_2$ and put labels $l_\RomanI = \lbl{1}{\RIO{0}{1}}$ and $l_\RomanII
= \lbl{1}{\RIO{0}{2}}$ into the queue.  Next, label $l_\RomanI$ could
be retrieved from the priority queue first, thus erroneously finding
this label (for the path of edge $e_1$ only) efficient, while it is
label $l_\RomanII$ (for the path of edges $e_2$ and $e_3$) that is
efficient.

\begin{figure}
  \centering%
  \begin{tikzpicture}

    \node [node] (s) {$s$};
    \node [node, above right = 1.25 cm and 2.5 cm of s] (t) {$t$};
    \node [node, below right = 1.25 cm and 2.5 cm of s] (u) {$u$};

    \path (s) edge
    node [fill = white]
    {$e_1$, $\eattrs{1}{\RIO{0}{1}}$} (t);

    \path (s) edge
    node [fill = white]
    {$e_2$, $\eattrs{1}{\RIO{0}{2}}$} (u);

    \path (t) edge
    node [fill = white]
    {$e_3$, $\eattrs{0}{\RIO{0}{2}}$} (u);

  \end{tikzpicture}
  \caption{A sample failing example.}
  \label{f:fe}
\end{figure}

\subsubsection{A correction of the shortcoming}


Labels in the priority queue should be sorted using the $\le$
relation, as given by (\ref{e:l_<=}) so that RIs are taken into
account too.  Labels for the same vertex are $<$-comparable, but in
the queue there can be equal labels for different vertexes, and so the
$\le$ relation has to be used.  For any labels $l_i$ and $l_j$ in the
priority queue, label $l_i$ will be retrieved before $l_j$ if $l_i \le
l_j$.  Equal labels can be retrieved in arbitrary order.  At the top
of the queue there is the label that is $\le$-comparable with every
other label in the queue.


With the correction, the failing example above is solved correctly.
When visiting vertex $s$, the algorithm relaxes edges $e_1$ and $e_2$.
In the queue there are two labels: $l_\RomanI = \lbl{1}{\RIO{0}{1}}$,
and $l_\RomanII = \lbl{1}{\RIO{0}{2}}$.  Even though $l_\RomanII \prec
l_\RomanI$, label $l_\RomanI$ is kept as it is for a different vertex.
Label $l_\RomanII$ is retrieved from the queue first, because
$l_\RomanII < l_\RomanI$, and so vertex $u$ is visited next, and label
$l_\RomanIII = \lbl{1}{\RIO{0}{2}}$ is inserted into the queue.  Label
$l_\RomanIII$ is retrieved next, since $l_\RomanIII < l_\RomanI$, thus
label $l_\RomanIII$ is found efficient for vertex $t$.

\section{Generic principle of optimality}
\label{s:bellman}


The principle of optimality was formulated for the shortest-path
problem in \cite{bellman} by (\ref{e:antique}) where: the shortest
paths are found to target vertex $N$ (from all other vertexes), there
are no parallel edges, $f_i$ is the minimal cost of reaching $v_N$
from $v_i$, and $t_{ij}$ is the cost of the edge from $v_i$ to $v_j$
(of $\infty$ if the edge does not exist).  There, the graph was
undirected, so equivalently the search was from $v_N$ to all other
vertexes.  Over the years, for various shortest-path problems, the
principle has been reformulated.

\begin{equation}
  \begin{split}
    f_i &= \min_{j} \{t_{ij} + f_j\}
    \hspace{1cm} \text{if } i \ne N\\
    f_N &= 0
  \end{split}
  \label{e:antique}
\end{equation}


\subsection{Reformulation}

If we search for shortest paths from the source vertex $s$ (to all
other vertexes), and there are parallel edges, the principle of
optimality can be formulated by (\ref{e:standard}) where $\oplus$ can
be any operator provided the equations hold for an optimal solution,
e.g., of the lowest cost.  In this formulation, $f_i$ is the minimal
cost of reaching $v_i$ from $v_s$.  Operator $\oplus$ is usually $+$,
but can also be, e.g., the multiplication operator if we search for,
e.g., a path of the lowest probability of failure (i.e., highest
availability).  The relaxation of the Dijkstra algorithm uses this
formulation with the $+$ operator.

\begin{equation}
  \begin{split}
    f_s &= 0\\
    f_i &= \min_{e \in \I(v_i)} (f_{\source(e)} \oplus \cost(e))
    \hspace{1cm} \text{if } i \ne s
  \end{split}
  \label{e:standard}
\end{equation}


Formulation (\ref{e:standard}) makes two assumptions.  First, a total
ordering of labels is assumed (e.g., the ordering of real numbers),
because a vertex can have one label only: the minimum operation is
expected to provide a single optimal label (e.g., a finite set of real
numbers has a single minimum).  By Assumption \ref{a:equivalent}, from
among equal labels, the minimum operation chooses one arbitrarily.
Second, suboptimal labels are assumed expendable (and so are discarded
by the minimum operation), because, as given by Assumption
\ref{a:suboptimal}, suboptimal label $f_i'$ yields a suboptimal label
$f_i' \oplus \cost(e)$.

\begin{assumption}
  A suboptimal label yields a suboptimal label, i.e., $f_i < f_i'
  \implies f_i \oplus \cost(e) < f_i' \oplus \cost(e)$, for edge $e$
  leaving $v_i$.
  \label{a:suboptimal}
\end{assumption}

Furthermore, this assumption stipulates the optimal substructure: we
know that an optimal label for the target vertex of edge $e$ must be
derived from an optimal label, because deriving it from a suboptimal
one would violate this assumption.  For instance, this assumption
holds for integer and real numbers, and the $+$ operator, by
Assumption \ref{a:cost}.  A problem that violates Assumption
\ref{a:suboptimal}, and thus the principle of optimality, cannot be
solved with the standard Dijkstra algorithm.


If the ordering of labels is partial, then a vertex can have a set of
incomparable labels, and so the principle of optimality has to be
generalized.


\subsection{Generalization}


We propose a generalization that we call the \emph{generic principle
of optimality}, and which could also be called the generic dynamic
programming principle.  The generalization describes an acceptable
solution that we call the \emph{efficient-resource-path
tree}\footnote{In \cite{10.1109/NOMS56928.2023.10154322} we called it
the efficient-path tree.}, a parallel to the shortest-path tree
described by the principle of optimality.


In the efficient-resource-path tree, each vertex has a minimal and
complete set of efficient labels.  Minimal, because from among the
equal labels we keep one by Assumption \ref{a:equivalent}.  Complete,
because we produce all efficient labels, i.e., there does not exist an
efficient label that we could add to the set.


The generalization is given by (\ref{e:generic}), where $C_e$ is the
set of candidate labels we get by traversing edge $e$.  A union of
candidate labels for all edges that lead to $v_i$ form the set of
candidate labels for $v_i$.  Then from the union we take a minimum to
get the set of efficient labels $P_i$ for $v_i$.


\begin{equation}
  \begin{split}
    P_s &= \{(0, \Omega)\}\\
    P_i &= \min \bigcup_{e \in \I(v_i)} C_e
    \hspace{1cm} \text{if } i \ne s
  \end{split}
  \label{e:generic}
\end{equation}


As given by (\ref{e:candidate1}), candidate labels $C_e$ are derived
from every label $l \in P_{\source(e)}$ and $e$.  $P_{\source(e)}$ is
the set of efficient labels of the source vertex of $e$.

\begin{equation}
  C_e = P_{\source(e)} \oplus e = \bigcup_{l \in P_{\source(e)}} l \oplus e
  \label{e:candidate1}
\end{equation}


Formulation (\ref{e:generic}) makes two assumptions.  First, a partial
ordering of labels is assumed, because a vertex can have a set of
incomparable labels: the minimum operation is allowed to provide a set
of efficient labels.  By Assumption \ref{a:equivalent}, from among
equal labels, the minimum operation chooses one arbitrarily.  Second,
by Assumption \ref{a:inefficient}, the minimum operation discards
inefficient labels.


Furthermore, this assumption stipulates the efficient substructure of
a solution: a subproblem has an efficient solution, as given by
Corollary \ref{c:efficient}; in short, an efficient label must be
derived from an efficient label, because deriving it from an
inefficient one would violate this assumption.  A problem that
violates Assumption \ref{a:inefficient}, and thus is out of line with
the generic principle of optimality, cannot be solved with the generic
Dijkstra algorithm.

\begin{corollary}
  An efficient label has the efficient substructure, i.e., has been
  derived from an efficient label.
  \label{c:efficient}
\end{corollary}

\begin{proof}
  We prove by contradiction: we assume a label is efficient, and
  suppose that its substructure is not to conclude that the assumption
  was false.

  We search for efficient labels from $v_s$, as shown in
  Fig.~\ref{f:substructure}.  To examine the label substructure, we
  assume a resource path to vertex $v'$ is made of a resource path to
  preceding vertex $v$ and the appended edge $e'$.  We assume two
  resource paths lead to $v$: $q_i$ with efficient label $l_i$, and
  $q_j$ with inefficient label $l_j$, and so $l_i \prec l_j$.  The
  label of a resource path to vertex $v'$ is derived from a label of a
  resource path to vertex $v$ and edge $e'$.

  We assume there exists a resource path to vertex $v'$ with efficient
  label $l$, but we suppose an inefficient substructure: label $l$ is
  derived from $l_j$.

  By Assumption \ref{a:inefficient}, for every label $l \in l_j \oplus
  e$ there exists $l^* \in l_i \oplus e$ such that $l^* \prec l$, and
  so $l$ cannot be efficient by Definition \ref{d:efficiency}, and
  that contradicts the assumption.  The proof applies to all preceding
  labels (of intermediate vertexes $v$ of a path) as we trace them
  back recursively to finally stop at the initial label.
\end{proof}

\begin{figure}
  \centering%

  \begin{tikzpicture}[node distance = 2.25 cm, ->]
    \node [node] (s) {$s$};
    \node [node, right of = s] (v) {$v$};

    \coordinate (mid) at ($(s)!0.5!(v)$);
    \node at ($(mid) + (0, 0.75)$) (qi) {$q_i$};
    \node at ($(mid) + (0, -0.75)$) (qj) {$q_j$};

    \node [node, right of = v] (v') {$v'$};

    \draw
    (s) .. controls ($(s) + (-0.25, 1)$) and ($(qi) + (-0.5, -0.25)$) ..
    (qi) .. controls ($(qi) + (0.5, 0)$) and ($(v) + (-0.2, 0.75)$)..
    (v);

    \coordinate (qja) at ($(s) + (0.25, -0.5)$);
    \draw
    (s) .. controls ($(s) + (-0.1, -0.25)$) and ($(qja) + (-0.75, -0.75)$) ..
    (qja) .. controls ($(qja) + (0.25, 0.25)$) and ($(qj) + (-0.5, 0)$) ..
    (qj) .. controls ($(qj) + (1, 0.1)$) and ($(v) + (-1, -0.25)$)..
    (v);

    \draw (v) -- node [circle, fill = white] {$e'$} (v');

  \end{tikzpicture}

  \caption{Efficient substructure.}
  \label{f:substructure}
\end{figure}

\subsection{Constriction}


Set $C_e$ does not include the labels with an empty RI because they
are excluded by (\ref{e:L}) -- that is a constriction, similar to the
standard Dijkstra algorithm constriction: some solutions are deemed
infeasible and are discarded.  While a label with an empty RI would
satisfy the generic principle of optimality, it would not describe a
resource path that meets the resource continuity and contiguity
constraints.  If empty RIs were allowed, the efficient-resource-path
tree would include the shortest-path tree.


\section{Problem statement}

Prove the correctness of the generic Dijkstra algorithm that solves
the following problem.


\noindent{}Given:

\begin{itemize}

\item weighted, directed multigraph $G = (V, E)$, where $V = \{v_i\}$
  is a set of vertexes, and $E = \{e_i\}$ is a set of edges,

\item cost function $\cost(e_i)$, which gives the cost of edge $e_i$,

\item available units function $\AU(e_i)$, which gives the available
  units of edge $e_i$,

\item Condition \ref{c:sufficient} is met,

\item an efficient label meets the generic principle of optimality,

\item the set of all units $\Omega$ on every edge,

\item the source vertex $s$.

\end{itemize}


\noindent{}Find:

\begin{itemize}

\item an efficient-resource-path tree rooted at $s$ that meets the
  resource continuity and contiguity constraints.

\end{itemize}

\subsection{The algorithm}


Algorithm \ref{a:algorithm} shows the generic Dijkstra algorithm, and
Algorithm \ref{a:relax} shows the relaxation procedure.

\begin{algorithm}
  \caption{Generic Dijkstra\\
    In: graph $G$, source vertex $s$\\
    Out: an efficient-resource-path tree\\
    \emph{Here we concentrate on permanent labels.}}
  \label{a:algorithm}
  \begin{algorithmic}
    \STATE // The initial label with 0 as the identity of $\oplus$.
    \STATE $T_s = \{\lbl{0}{\Omega}\}$
    \WHILE{$T$ is not empty}
    \STATE $l = \text{pop}(T)$
    \STATE $v = \target(\edge(l))$
    \STATE // Add $l$ to the set of permanent labels for vertex $v$.
    \STATE $P_v = P_v \cup \{l\}$
    \FORALL{out edge $e$ of $v$ in $G$}
    \STATE $\myrelax(l, e)$
    \ENDFOR
    \ENDWHILE
    \RETURN $P$
  \end{algorithmic}
\end{algorithm}

\begin{algorithm}[t]
  \caption{$\myrelax$\\
    In: label $l$, edge $e$\\
    \emph{Here we concentrate on tentative labels.}}
  \label{a:relax}
    \begin{algorithmic}
      \STATE $c' = \cost(l) \oplus \cost(e)$
      \STATE $v' = \target(e)$
      \FORALL{RI $I' \ne \emptyset$ in $\RI(l) \cap \AU(e)$}
      \STATE $l' = (c', I')$
      \STATE // Is label $l'$ expendable?
      \IF{$P_{v'} \not\preceq l'$ and $T_{v'} \not\preceq l'$}
      \STATE // From $T_{v'}$, discard labels $l_{v'}$ worse than $l'$,
      \STATE // leaving only labels incomparable with $l'$.
      \STATE $T_{v'} = T_{v'} - \{l_{v'} \in T_{v'}: l' \preceq l_{v'}\}$
      \STATE // Add $l'$ to the set of tentative labels for vertex $v'$.
      \STATE $T_{v'} = T_{v'} \cup \{l'\}$
      \ENDIF
      \ENDFOR
    \end{algorithmic}
\end{algorithm}

\section{Correctness}
\label{s:correctness}

We propose an inductive proof for the generic Dijkstra algorithm that
also holds for the standard Dijkstra algorithm since a total ordering
(required by the Dijkstra algorithm) is an extension of a partial
ordering (required by the generic Dijkstra algorithm).  A total
ordering is more restrictive as it does not allow for incomparability,
and that simplifies searching.

\subsection{Intuition}


Generic Dijkstra algorithm is correct for two reasons.  First, from
among the tentative labels (that are a link away from efficient
labels), the priority queue provides at the top an efficient label
$l$, because labels are sorted with $\le$: the relation $l \le l'$
holds for any other label $l'$ in the queue, and so no label $l'$ is
better than $l$, i.e., $l \le l' \iff l \preceq l' \text{ or } l
\parallel l' \iff l' \nprec l$.  Second, relaxation replenishes the
priority queue with labels $l'$ derived from $l$, and so, by
Proposition \ref{p:derived}, they cannot be better than $l$.


Relaxation is not necessary for the algorithm correctness.  Instead of
relaxation, it would suffice to ensure a candidate label $l'$ for
vertex $v'$ does not represent a loop (a path that returns to one of
its vertexes) before inserting it into the queue, and is not worse
than or equal to any permanent label for vertex $v'$ before making it
permanent.  Such procedure, however, would be inefficient: a candidate
label that is worse than any known label would be inserted, retrieved,
and inevitably discarded.


Relaxation improves efficiency.  By two steps.  The first inserts into
the queue a candidate label $l'$ only if it is promising at the time
of relaxation, i.e., incomparable with the known labels for vertex
$v'$.  The second discards the tentative labels for vertex $v'$ which
turn out to be worse than the candidate label $l'$ to be inserted into
the priority queue.  Even though we look for worse labels, we use
$\preceq$ because it executes on a computer faster than $\prec$, and
because the equality would never hold, as checked by the first step.

\subsection{Proof}

\begin{theorem}
  The algorithm terminates with a complete set of efficient labels.
\end{theorem}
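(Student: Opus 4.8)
The plan is to prove the theorem in two logically separate halves — termination and correctness — and to prove correctness by induction on the order in which labels become permanent, mirroring the non-contradiction style of \cite{networkflows} that the authors promised to follow. The invariant I would carry through the induction is: \emph{at the moment a label $l$ is popped from $T$ and added to $P_v$, it is an efficient label for $v$, and no efficient label for $v$ is missing from $P_v \cup T$ that is $\le$-smaller than the current queue minimum.} The key structural fact driving everything is the chain $\prec\,\subset\,\le$ together with $l \le l'$ for every $l'$ derived from $l$: the queue hands out labels in $\le$-order, an efficient label is never $\le$-larger than an inefficient one that could dominate it, and derived labels are always $\le$-later, so a label is never popped before a label that ought to precede it.

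For \textbf{termination} I would argue that only finitely many distinct labels can ever exist: costs along acyclic paths are drawn from a finite set (resource units $\Omega = [0,U)$ give finitely many RIs, and the relaxation guarantees $\RI(l')\ne\emptyset$), and relaxation inserts a label into $T_{v'}$ only when no $l_{v'} \in P_{v'} \cup T_{v'}$ satisfies $l_{v'} \preceq l'$. Hence each label is made permanent at most once, and the main loop executes finitely often. I would make explicit that relaxation's dominance check prevents re-insertion of labels already settled, so the permanent sets grow monotonically toward a finite limit.

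For the \textbf{correctness} half I would establish two sub-claims. Soundness: every label placed in $P_v$ is efficient for $v$. This follows because the popped label $l$ is the $\le$-minimum of $T$; if some $\bar l \prec l$ were efficient for $v$ but absent from $P_v$, then since $\prec\,\subset\,\le$ we have $\bar l \le l$, so $\bar l$ (or an ancestor generating it) would have been popped first — contradicting the induction hypothesis that all strictly-$\le$-earlier efficient labels are already permanent. Completeness: every efficient label for every vertex is eventually produced. Here I would invoke the generic principle of optimality (\ref{f:bellman3}): an efficient label $l \in P_i$ arises as some $l' \in C_e = P_{\source(e)} \oplus e$, so I would induct on $\le$ to show that once all efficient labels of $\source(e)$ are permanent, relaxation of $e$ generates $l$ (it survives the $\preceq$-filter precisely because $l$ is efficient, hence undominated), and it is subsequently popped and made permanent.

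The \textbf{main obstacle} I anticipate is the interaction between the partial order $\preceq$ and the total extension $\le$ in the completeness argument — specifically, showing that relaxation's discard rule never deletes a genuinely efficient label. A label $l'$ is dropped from $T_{v'}$ only if some $l_{v'} \preceq l'$ is present; I must verify that such a dominating $l_{v'}$ is itself efficient (or leads to one), so that discarding $l'$ loses nothing from the minimal complete set. This requires carefully distinguishing $\preceq$-domination (which legitimately removes a label) from mere $\le$-ordering (which only schedules it later), and confirming that the arbitrary tie-breaking among equal labels for one vertex is consistent with the ``minimal'' requirement of the problem statement. The gray-cell tie-breaking from Table \ref{t:r_rels} and the shortcoming-correction (sorting $T$ by $\le$, not by cost alone) are exactly what guarantee a dominating label is popped before the label it dominates, so I would lean on those two facts to close the gap.
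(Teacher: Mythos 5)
Your plan is correct and follows essentially the same route as the paper's proof: induction over iterations of the main loop, driven by the chain $l \le l_u \le l'$ (the popped label $\le$-precedes every tentative label $l_u$, and every label $l'$ derived from one) together with $\prec\,\subset\,\le$ to conclude the popped label cannot be dominated, with termination resting on the finiteness of the label space. The remaining differences are cosmetic — you cast soundness as a contradiction where the paper argues directly, and the obstacle you flag about the discard rule dissolves at once, since any $\preceq$-dominating label in $P_{v'} \cup T_{v'}$ is itself feasible and therefore already certifies the discarded label inefficient, with no need for the dominator to be efficient.
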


\begin{proof}

  We prove by induction.  The induction step corresponds to an
  iteration of the main loop.  The induction hypotheses are:

  \begin{enumerate}

  \item $P$ has efficient labels derived from efficient labels,

  \item $T$ has incomparable labels derived from efficient labels.

  \end{enumerate}

  \textbf{Basis.} $|P| = 1$.  In the first iteration, the initial
  label $\lbl{0}{\Omega}$ for vertex $s$ is retrieved from $T_s$ and
  added to $P_s$.  This label is the root of the
  efficient-resource-path tree, and is the only label that has not
  been derived.

  The initial label is efficient because no better label could exist.
  Such a better label would describe a path from $s$ to $s$ either of
  cost lower than 0 (violating Assumption \ref{a:cost}), or of units
  that properly include $\Omega$ (violating Assumption
  \ref{a:continuity}).

  \textbf{Inductive step.} In an inductive step, efficient label $l$
  from the top of $T$ is moved to $P$.  Label $l$ is for vertex $v$.

  
  To prove that label $l$ is efficient, we show that any other label
  $l'$ for vertex $v$ could not be better, i.e., $l' \prec l$ cannot
  hold.  Label $l'$ must be unknown because otherwise $l$ would not
  be in $T$ by the second hypothesis.

  Label $l'$ must be derived from a known label $l_u$ for some
  preceding vertex $u$.  Label $l_u$ must be tentative because
  otherwise (i.e., $l_u$ is permanent) $l'$ would be known (tentative)
  by the induction hypotheses.  The algorithm would never produce such
  $l'$ derived from tentative $l_u$, but that would prove the
  algorithm incorrect if $l'$ turns out to be efficient.

  Relation $l_u \le l'$ holds by Proposition \ref{p:derived}.
  Relation $l \le l_u$ holds, since it was label $l$ that was
  retrieved from the top of the priority queue.  Relation $l \le l'$
  holds, since $l \le l_u \le l'$, and since $\le$ is transitive by
  Lemma \ref{l:l_trans}.  Therefore $l' \prec l$ cannot hold (because
  $l' < l$ does not, see Fig.~\ref{f:relations}), and so $l$ is
  efficient.
  
  
  The efficient label $l$ added to $P$ preserves the first hypothesis.
  Relaxation preserves the second hypothesis using the equations of
  the generic principle of optimality: incomparable labels $l'$
  derived from $l$ are produced, and labels worse than $l'$ are
  discarded.

  
  \textbf{Termination.} The algorithm terminates when the priority
  queue gets empty, and then set $P$ is complete.  The algorithm
  terminates as set $P$ is finite: $P$ is replenished by $T$, and the
  number of labels in $T$ is polynomially bounded (see Section
  \ref{s:tractability}).

\end{proof}
  
\section{Tractability}
\label{s:tractability}

We argue the problem is tractable because the size of the search space
is polynomially bounded.  The worst-case number $L$ of labels to
process, which is the number of incomparable labels for all vertexes,
is $L = S|V|$, where $S$ is the number of incomparable labels a vertex
can have.


Number $S$ depends on $U$ only, and is given by (\ref{e:S}).  The
maximal set of incomparable labels has labels whose cost increases
along with the size of their RIs (i.e., the RIs of labels of higher
cost properly include the RIs of labels of lower cost).  The set has
$U$ subsets where labels in each subset can have equal cost: the first
has $U$ labels with RIs of a single unit and the lowest cost; the
second has $U - 1$ labels with RIs of two units and a higher cost,
\ldots; and the last has a single label with the RI of $U$ units and
the highest cost.  The set therefore has $1 + 2 + \ldots + U = (U +
1)U/2$ incomparable labels.


Therefore, $O(L) = O(U^2|V|)$, a polynomial bound.

\begin{equation}
  S = \frac{(U + 1)U}{2}
  \label{e:S}
\end{equation}


Fig.~\ref{f:labels} draws in black the maximal set of incomparable
labels for three units.  There are also drawn dotted two labels that
are inefficient.  There are three subsets: the first with three labels
of one unit and costs 1, 2 and 3; the second with two labels of two
units and costs 5 and 6; and the third with one label of three units
and cost 8.  We could have made the labels in a subset to have equal
cost (e.g., the second subset with both labels of cost 5), but they
would be harder to plot in the figure.

\begin{figure}
  \centering
  \begin{tikzpicture}
    \begin{axis}[height = 4 cm, width = 6 cm, xlabel = cost,
        ylabel = units]

      \addplot[mark = *] coordinates {
        (1, 1)

        (2, 2)

        (3, 3)

        (5, 1) (5, 2)

        (6, 2) (6, 3)

        (8, 1) (8, 2) (8, 3)};

      \addplot[mark = *, fill = white, densely dotted] coordinates {
        (4, 2)

        (7, 1) (7, 2)};

    \end{axis}
  \end{tikzpicture}
  \caption{The maximal set of incomparable labels for three units.}
  \label{f:labels}
\end{figure}

\section{Conclusion}

Routing of a single connection is one of the most important tasks of
network operations and management.  We have shown that in networks
with discrete resources under the contiguity and continuity
constraints, that task can be efficiently and exactly performed with
the generic Dijkstra algorithm.  The algorithm is not limited to
optical networks, and can be further adapted, similar to how the
Dijkstra algorithm has been adapted over decades.

The novel ideas of resource inclusion, and solution incomparability,
perhaps could be used in other algorithms too, e.g., the Suurballe
algorithm.  The problems that are currently solved with heuristics,
e.g., machine learning, perhaps could be solved exactly and
efficiently with graph algorithms augmented with the proposed novel
ideas, now proven correct.

As future work, the average and worst-case complexities of time and
memory could be analytically evaluated to allow for comparison with
other algorithms, e.g., the filtered-graphs algorithm.  The generic
Dijkstra algorithm has been demonstrated by simulation to be
efficient, but whether the algorithm is the most efficient in
comparison with other algorithms, perhaps even optimal, deserves
further research.

\section{Acknowledgment}

We dedicate this work to Alexander Stepanov for his decades-long
inspiration, and his contributions to generic programming and C++.

\bibliographystyle{unsrt}
\bibliography{all}

\appendices

\section{Needlessly strong expendability}
\label{s:needless}

We could define expendability (denoted by bold $\boldsymbol\preceq$,
different than $\preceq$ defined by (\ref{e:expendability2})) as given
by (\ref{e:needless1a}), i.e., $L'$ would be expendable in relation
with $L$ if for every label $l' \in L'$, there did not exist $l \in L$
that would be incomparable or worse.

\begin{equation}
  L \boldsymbol\preceq L' \iff \forall l' \in L' \; \nexists l \in L
  (l' \parallel l \text{ or } l' \prec l)
  \label{e:needless1a}
\end{equation}

Similarly, we could define expendability as given by
(\ref{e:needless1b}), i.e., $L'$ would be expendable in relation with
$L$ if for every label $l \in L$, there did not exist $l' \in L'$ that
would be incomparable or better.

\begin{equation}
  L \boldsymbol\preceq L' \iff \forall l \in L \; \nexists l' \in L'
  (l' \parallel l \text{ or } l' \prec l)
  \label{e:needless1b}
\end{equation}

Since $(l' \parallel l \text{ or } l' \prec l) \iff l \not\preceq l'$,
we rewrite (\ref{e:needless1a}) and (\ref{e:needless1b}) as
(\ref{e:needless2a}) and (\ref{e:needless2b}).

\begin{equation}
  L \boldsymbol\preceq L' \iff \forall l' \in L' \; \nexists l \in L
  (l \not\preceq l')
  \label{e:needless2a}
\end{equation}

\begin{equation}
  L \boldsymbol\preceq L' \iff \forall l \in L \; \nexists l' \in L'
  (l \not\preceq l')
  \label{e:needless2b}
\end{equation}

Finally, by the De Morgan's law, (\ref{e:needless2a}) and
(\ref{e:needless2b}) are equivalent to (\ref{e:needless}) which is too
strong as it requires every label in $L$ be equal to or better than
every label in $L'$.

\begin{equation}
  L \boldsymbol\preceq L' \iff \forall l \in L \; \forall l' \in L' (l
  \preceq l')
  \label{e:needless}
\end{equation}

\end{document}